\algnewcommand{\Inputs}[1]{%
  \State \textbf{Inputs:}
   \hspace*{\algorithmicindent}\parbox[t]{.8\linewidth}{\raggedright #1}
}
\algnewcommand{\Initialize}[1]{%
  \State \textbf{Initialization:}
  \Statex \hspace*{\algorithmicindent}\parbox[t]{.8\linewidth}{\raggedright #1}
}
\algnewcommand{\Output}[1]{%
  \State \textbf{Output:}
   \hspace*{\algorithmicindent}\parbox[t]{.8\linewidth}{\raggedright #1}
}
\algnewcommand{\Update}[1]{%
  \State \textbf{Update:}
   \hspace*{\algorithmicindent}\parbox[t]{.8\linewidth}{\raggedright #1}
}
\newtheorem{theorem}{Theorem}
\newcommand{\Matrix}[1]{\mathbf{#1}}
\begin{document}
%
% paper title
% Titles are generally capitalized except for words such as a, an, and, as,
% at, but, by, for, in, nor, of, on, or, the, to and up, which are usually
% not capitalized unless they are the first or last word of the title.
% Linebreaks \\ can be used within to get better formatting as desired.
% Do not put math or special symbols in the title.
%\title{Joint Optimization of Communication and Computational resources}
%\title{Delay Constrained Energy Optimization for Edge Cloud Offloading}
\title{Delay constrained Energy Optimization for Edge Cloud Offloading}
% author names and affiliations
% use a multiple column layout for up to three different
% affiliations
\makeatletter
\def\ps@IEEEtitlepagestyle{
  \def\@oddfoot{\mycopyrightnotice}
  \def\@evenfoot{}
}
\def\mycopyrightnotice{
  {\footnotesize
  \begin{minipage}{\textwidth}
  \centering
 Copyright~\copyright~2018 IEEE ICC Workshops. Personal use of this material is permitted. Permission from IEEE must be obtained for all other uses, in any current or future media, including reprinting/republishing this material for advertising or promotional purposes, creating new collective works, for resale or redistribution to servers or lists, or reuse of any copyrighted component of this work in other works.
  \end{minipage}
  }
}
% \author{\IEEEauthorblockN{Shreya Tayade}
% % \IEEEauthorblockA{Institute of Wireless Communication and Navigation\\
% % University of Kaiserslautern, Germany\\
% % Email: tayade@eit.uni-kl.de}
%  \and
%  \IEEEauthorblockN{Peter Rost}
% % \IEEEauthorblockA{Twentieth Century Fox\\
% % Springfield, USA\\
% % Email: homer@thesimpsons.com}
% }

% conference papers do not typically use \thanks and this command
% is locked out in conference mode. If really needed, such as for
% the acknowledgment of grants, issue a \IEEEoverridecommandlockouts
% after \documentclass

% for over three affiliations, or if they all won't fit within the width
% of the page, use this alternative format:
% 
% \author{\IEEEauthorblockN{Shreya Tayade \IEEEauthorrefmark{1},
% Peter Rost\IEEEauthorrefmark{2},
% Andreas Maeder\IEEEauthorrefmark{3}, 
% Hans Schotten \IEEEauthorrefmark{3}}
% %\IEEEauthorblockA{\IEEEauthorrefmark{1}School of Electrical and Computer Engineering\\
%Georgia Institute of Technology,
%Atlanta, Georgia 30332--0250\\ Email: see http://www.michaelshell.org/contact.html}
%\IEEEauthorblockA{\IEEEauthorrefmark{2}Twentieth Century Fox, Springfield, USA\\
%Email: homer@thesimpsons.com}
%\IEEEauthorblockA{\IEEEauthorrefmark{3}Starfleet Academy, San Francisco, California 96678-2391\\
%Telephone: (800) 555--1212, Fax: (888) 555--1212}
%\IEEEauthorblockA{\IEEEauthorrefmark{4}Tyrell Inc., 123 Replicant Street, Los Angeles, California 90210--4321}}

\author{
\IEEEauthorblockN{Shreya Tayade\IEEEauthorrefmark{1},
Peter Rost\IEEEauthorrefmark{2},
Andreas Maeder\IEEEauthorrefmark{2} and 
Hans D. Schotten\IEEEauthorrefmark{1}}
\IEEEauthorblockA{\IEEEauthorrefmark{1}University of Kaiserslautern,
Institute for Wireless Communications and Navigation,
Kaiserslautern, Germany\\
Email: \{tayade, schotten\}@eit.uni-kl.de}
\IEEEauthorblockA{\IEEEauthorrefmark{2}Nokia Bell Labs,
Munich, Germany\\
Email: \{peter.m.rost, andreas.maeder\}@nokia-bell-labs.com}
}

% make the title area
\maketitle

% As a general rule, do not put math, special symbols or citations
% in the abstract
\begin{abstract}
Resource limited user-devices may offload computation to a cloud server, in order to reduce power consumption and lower the execution time. However, to communicate to the cloud server over a wireless channel, additional energy is consumed for transmitting the data. Also a delay is introduced for offloading the data and receiving the response. Therefore, an optimal decision needs to be made that would reduce the energy consumption, while simultaneously satisfying the delay constraint. In this paper, we obtain an optimal closed form solution for these decision variables in a multi-user scenario. Furthermore, we optimally allocate the cloud server resources to the user devices, and evaluate the minimum delay that the system can provide, for a given bandwidth and number of user devices.
%The presented bandwidth-delay trade-off analysis provides an insight to efficiently and jointly scale communication and cloud-computing resources under given latency constraints.  
\end{abstract}

% no keywords
%\keywords{Cloud Offloading, computational complexity, communication complexity, energy optimization}

% For peer review papers, you can put extra information on the cover
% page as needed:
% \ifCLASSOPTIONpeerreview
% \begin{center} \bfseries EDICS Category: 3-BBND \end{center}
% \fi
%
% For peerreview papers, this IEEEtran command inserts a page break and
% creates the second title. It will be ignored for other modes.
\IEEEpeerreviewmaketitle
\section{Introduction}
%motivation 
Edge cloud offloading is a promising technique that enables resource-limited user devices to execute computationally extensive tasks. Cloud offloading has been broadly studied recently \cite{Kumar, Kumar2013, mao2017mobile,Cui2013}. From the user device perspective, to optimally offload the computation, two necessary conditions have to be satisfied: a) Energy consumption of the device should be minimal, and b) the offloaded task should be processed within a given latency constraint. However, although energy can be saved by offloading the computation, an additional energy is consumed for transmitting the data to the cloud. Furthermore, an additional transmitting and receiving delay is introduced for offloading the computation to the cloud. As their exists a trade-off between processing locally and offloading, we evaluate the optimal offloading decision.  

Many optimal offloading strategies have been proposed to reduce the energy consumption of user devices \cite{ Zhang2013, XudongXiang2014, Zhang2016,You2017}. In \cite{Zhang2013, XudongXiang2014}, the energy efficiency of a user device is increased by dynamically scheduling data transmission and link selection, as per the channel condition. 
Also, delay constrained, energy minimizing offloading techniques have been proposed in \cite{Kao,DongHuang2012,Zhang2017}. In \cite{Kao,DongHuang2012}, the authors partition a single task, and offload the individual partitions to the distributed cloud servers, ensuring that the execution is completed within a given deadline. However, the work in \cite{Zhang2013, XudongXiang2014, Kao,DongHuang2012} does not consider the multi-user effects on the cloud server, while taking an offloading decision. 

Furthermore, offloading computation also implies an additional cost of communication and cloud resources. The offloading decision is highly influenced by the availability of these resources \cite{Offloading2017}. Resources like bandwidth, cloud server capacity should be sufficient to satisfy the system requirements of ultra-low latency, and serve computational needs of all the user devices. At the same time, these resources must be used efficiently, which motivates the trade\-off analysis between these resources and the imposed delay requirements.
%which motivates optimal resource allocation of the resources.
%process all the offloaded computation from multiple user devices. 
%the required computation resources by multiple users.
%
\cite{Sardellitti2015, Sardellitti2014} deal with joint optimization of the communication and computational resources for cloud offloading. However, no analysis on the trade-off between these resources and the achieved delay performance was presented. %An understanding to this bandwidth-delay trade-off will be crucial, to efficiently scale the communication resources as per the latency requirements.

In this paper, a delay constrained energy optimization algorithm to optimally offload the computation to a cloud-server is designed. A closed form solution is provided for an optimal offloading decision. Also, the cloud resources are optimally allocated among multiple users. Furthermore, the delay performance of the system is analyzed for a given bandwidth.
% Using the optimal offloading solution, we further investigated the trade-off between bandwidth and the achieved delay performance. The minimum delay that the system can achieve, with the given bandwidth and cloud server capacity, is evaluated.
% Also, the cloud resources are optimally allocated among multiple users. Furthermore, the delay performance of system is analyzed for the given bandwidth and cloud server resources.
In Section~\ref{sec: system:model}, we describe the system model. The energy optimization problem and the closed form solution is presented in Section~\ref{sec: sum.energy}. Finally, the results and conclusion are discussed in Section~\ref{sec:results} and \ref{sec:conclusion}, respectively.
\section{System Model} \label{sec: system:model}
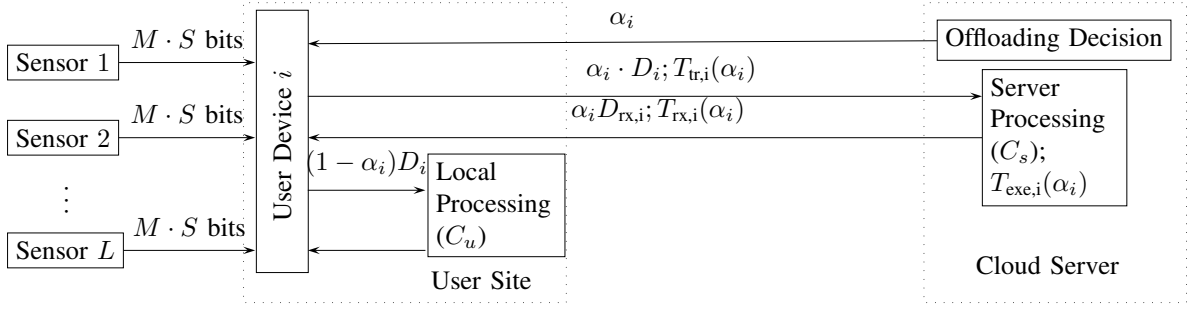
\begin{figure*}
  \centering
  \begingroup
\unitlength=1mm
\begin{picture}(155, 39)(0, 0)
\psset{xunit=1mm, yunit=1mm, linewidth=0.1mm}
\psset{arrowsize=2pt 3, arrowlength=1.6, arrowinset=.4}
 \rput(0, -3){
 	\rput[l](0, 35){\rnode{S1}{\psframebox{Sensor $1$}}}%	
	\rput[l](0, 25){\rnode{S2}{\psframebox{Sensor $2$}}}%	
    \rput[l](8, 17){\rput{90}{$\cdots$}}%
	\rput[l](0, 10){\rnode{SL}{\psframebox{Sensor $L$}}}%	
    \rput[l](40, 5){\pnode(0, 32.5){MUXout}%
    	\pnode(-7, 30){MuxIn1}%
    	\pnode(-7, 20){MuxIn2}%
    	\pnode(-7, 5){MuxInL}%
        \pnode(0,25.5){Muxout2}
        \pnode(0,20){Muxout3}
        \pnode(0,13){Muxout4}
        \pnode(0,5){Muxout5}
       % \pnode(0, 28){MuxInRx}%
        }
        
}
\rput[c]{90}(40.7,4){\psframe(35, 7){\rput[c](17, 3.5){User Device $i$}}}
\rnode{frame}{\rput[l](32, 0){\psframe[linestyle = dotted](43, 40){\rput[l](25,3){User Site}}}}
    %   \rnode{userdevice}{\rput[l](40, 30){\psframebox{User Device $i$}}}
       \rput[l](55.3, 13){\psframebox{\parbox[c]{1.6cm}{Local Processing ($C_u$)}}}
\rput[l](120,0){
    \psframe[linestyle= dotted](35,40){\rput[l](7,5){Cloud Server}}
    }
	\rput[l](123,35){\rnode{offloading}{\psframebox{Offloading Decision}}}
    \rput[l](129,22){\rnode{servProcess}{\psframebox{\parbox[c]{1.7cm}{Server Processing ($C_s$); $T_\text{exe,i}(\alpha_i)$}}}}
   \pnode(129,27.5){servProcess3}
    \pnode(55,7){LocalProcess2}
    \pnode(55,15){LocalProcess}
    \ncline{->}{offloading}{MUXout}\nbput{$\alpha_i$}
    \ncline{->}{S1}{MuxIn1}\naput{$M\cdot S$ bits}
    \ncline{->}{S2}{MuxIn2}\naput{$M\cdot S$ bits}
    \ncline{->}{SL}{MuxInL}\naput{$M\cdot S$ bits}
    \ncline{->}{servProcess}{Muxout3}\nbput{$ \quad \alpha_i D_\text{rx,i}; T_\text{rx,i}(\alpha_i)$}%{$\alpha_i \cdot D_i$}  	
    \ncline{->}{Muxout2}{servProcess3}\naput{$ \quad \quad \alpha_i \cdot D_i; T_\text{tr,i}(\alpha_i)$}
    \ncline{->}{Muxout4}{LocalProcess}\naput{$(1-\alpha_i)D_i$}
    \ncline{->}{LocalProcess2}{Muxout5}
 
\end{picture}
\endgroup
  \caption{System model}
  \label{fig:system_model}
\end{figure*}
Consider $N$ uniformly distributed user devices in a circular area of radius $R$. An edge-cloud server is located at the base station in the center of the cell. The processor of an edge cloud server has a maximum computational capacity of $C_\text{s}$. The processors deployed in the user device have a maximum computational capacity of $C_\text{u}$, where, $C_\text{s} \gg C_\text{u}$. The user devices can successfully process all the data within the given time constraints. However, to minimize the energy consumption of user device, and reduce latency, user devices offload a share of data processing to the edge cloud server. The edge cloud server processes the data and sends the outcome of the computation to the user-device via the downlink channel as shown in Fig.~\ref{fig:system_model}. The uplink and downlink channel are known to the base-station for each user $i\in[1; N]$. 
%The Cloud server conveys the offloading decision $\alpha_i$ to every user device. Based on the offloading decision $\alpha_i$, $\alpha_i \cdot D_i$ bits are processed on the cloud, and $(1-\alpha_i) \cdot D_i$ bits are processed locally. 
%\textcolor{red}{Does FDD/TDD matter?} FDD system is assumed for uplink and downlink communication.
%As shown in the Fig.~\ref{fig:system_model}, depending on the offloading decision $\alpha_i$, 
\subsection{Data model}
\paragraph*{Uplink data model}
Every user device processes data from $L$ sensors. The data consists of $M$ data elements that are represented by $S$ bits each as shown in Fig.~\ref{fig:system_model}, e.\,g., surveillance by drones, where, $L$ cameras send images to the user device for processing. Each image is of pixel size $M$, where each pixel is represented by $S$ bits. Therefore, the total data bits that the user device needs to process is $D_i = L\cdot M\cdot S$. An algorithm to process this data has a complexity class given by the function $f_i(M)$. $f_i(M)$ represents the amount of computational cycles required, with respect to the number of data elements. The decision variable $\alpha_i$ denotes the share of data that should be offloaded, where $0\leq \alpha_i\leq 1$.  Hence, $\alpha_i \cdot D_i$ data bits are transmitted to an edge cloud server for processing. The processing algorithm is distributed to user device and edge cloud if 0 < $\alpha_i$ < 1.
\paragraph*{Downlink data model} Once the edge cloud server has completed all the data processing for the $i^\text{th}$ user device, it sends back the result to user device via the downlink channel. The total data bits sent from cloud server to the $i^\text{th}$ user device is $\alpha_i D_\text{rx,i}$, where $ D_\text{rx,i}= L \cdot S_\text{rx}$. $S_\text{rx}$ represents the number of bits used to encode the result of a single sensor of the device. 
%If $S_\text{rx}$ represents the number of bits, required to convey the result of a single sensor to the user device, the total data from cloud  server to $i^\text{th}$ user device is $\alpha_i D_\text{rx,i}$, where $ D_\text{rx,i}= L \cdot S_\text{rx}$.

\subsection{Delay model}\label{subsec: delay:model:111}
Let $T_\text{tr,i}$ be the total time for transmitting the data bits to the edge cloud server from the $i^\text{th}$ user. The available bandwidth $B$ is distributed equally among all the user devices, i.\,e., $B_i = B/N$ per user device. If $\alpha_i D_i$ are the total data bits transmitted over the channel with maximum spectral efficiency $R_i$, the total transmission time is given as
\begin{eqnarray}\label{eq:delay:optimization:100}
T_\text{tr,i}(\alpha_i) = \frac{\alpha_i D_i}{B_i R_i}. %+ \frac{d_i}{c}
\end{eqnarray}
After transmission, the sensor data is processed in the edge cloud server. 

The execution time $T_\text{exe,i}$ to process the data, depends upon the computational load $C_\text{serv,i}$ introduced by each user device on the cloud, and the available cloud server capacity $C_s$. The computation load at the cloud server from $i^\text{th}$ user device is given as $C_\text{serv,i} = L \cdot \eta_s \cdot f_i(M)$, where $\eta_s$ is the number of CPU cycles required to process a single data element, for an algorithm of complexity $f_i(\cdot)$. 
Therefore, the execution time is given as: 
\begin{align}\label{eq:delay:optimization:110}
T_\text{exe,i}(\alpha_i) = \frac{\alpha_i  C_\text{serv,i}}{\rho_i C_s},
\end{align}
where $\rho_i$ represents the percentage of cloud resource allocated to the $i^\text{th}$ user device, and it holds $\sum_i^N \rho_i\leq1$; $\forall i = 1 \dots N; \rho_i\geq0$.\\
We further define $T_\text{rx,i}$ to be the time required to receive the processed result from the cloud server to the $i$-th user device, i.\,e.,
\begin{equation} \label{eq:delay:optimization:120}
T_\text{rx,i}(\alpha_i) = \frac{ \alpha_i D_\text{rx,i}}{B_\text{rx,i} \cdot R_\text{rx,i}},
\end{equation}  
where, $B_\text{rx,i}$ and $R_\text{rx,i}$ are the allocated bandwidth and maximum spectral efficiency respectively, for the $i^\text{th}$ user device in the downlink.

In order to fulfill the latency requirements, the total delay experienced by the user for transmitting, processing and receiving should be less than the maximum delay $T_\text{max}$: 
\begin{align}
\forall i = 1 \dots N: T_\text{tr,i}(\alpha_i) + T_\text{exe,i}(\alpha_i) + T_\text{rx,i}(\alpha_i) \leq T_\text{max}. 
\end{align}
For the sake of simplicity and due to the limited space of this paper, queuing delay is not considered in the model but a pre-reservation of computation resources at the cloud-server is assumed (as it would apply in hard real-time operating systems).
\subsection{Device-centric energy consumption model} The energy consumption model for the user devices is based on the model presented in\cite{Offloading2017}.
\paragraph*{Energy consumption for local processing}
The total energy consumed by the $i^\text{th}$ user device to locally process $(1-\alpha_i) D_i$ bits, is given as 
\begin{equation}
E_\text{u,i}(\alpha_i) = (1-\alpha_i) \cdot \epsilon_{i} \cdot C_\text{u,i} \label{eq: optimization:1}
\end{equation}
where $\epsilon_i$ is the average amount of energy consumed by the user device for a single computation cycle, and $C_\text{u,i}$ is the computation load generated in terms of computation cycles on the user device \cite{Offloading2017}. The computational load is given as $C_\text{u,i} = L \cdot \eta_i f_{i}(M)$ where $L$ is the number of sensors, and $\eta_i$ is a processor specific proportionality constant. $\eta_i$ represents the number of computation cycles required to process a single data element ($M=1$) for an algorithm of complexity $f_i$.
\paragraph*{Energy consumption for offloading}
The energy consumed to transmit $\alpha_i D_i$ data bits to the cloud with spectral efficiency $R_i$ is given as 
\begin{eqnarray} \label{eq: delay:optimization:130}
E_\text{tr,i}(\alpha_i)  =  \frac{\left(2^{R_i} - 1\right)}{G} \cdot \left [\frac{d_i}{d_o}\right]^{\beta} \cdot N_0 B_i \cdot T_\text{tr,i}(\alpha_i)  
\end{eqnarray}
where $\beta$ is the path-loss exponent, $d_i$ is the distance between user device and base station, $d_o$ is the reference distance, $N_0$ is the noise power spectral density, and $G$ is attenuation constant for free-space path-loss. 
Using $T_\text{tr,i}$ in~\eqref{eq:delay:optimization:100}, we get
\begin{eqnarray} \label{eq:optimization:2}
E_\text{tr,i}(\alpha_i)  =  \frac{\left(2^{R_i} - 1\right)}{G} \cdot \left [\frac{d_i}{d_o}\right]^{\beta} \cdot N_0 B_i \cdot \frac{\alpha_i D_i}{B_i R_i}. 
\end{eqnarray}
The total energy consumption at the $i^{th}$ user device is 
\begin{equation}
 E_\text{sum,i}(\alpha_i) = E_\text{u,i}(\alpha_i) + E_\text{tr,i}(\alpha_i).
\end{equation}

\section{Offloading Optimization} \label{sec: sum.energy}
\subsection{Problem formulation} Our objective is the derivation of an optimal offloading strategy that minimizes the energy consumption of the user device, while simultaneously ensuring that the total delay is below the threshold $T_\text{max}$, i.\,e., 
\begin{align} \label{eq:opt.sum.energy.1}
\mathcal{A}' & = \text{arg} \min_{\mathcal{A}\in\mathbb{R}^N} \: \sum_{i=1}^N \: E_\text{sum,i}(\alpha_i) \\
\text{s.t} & \quad  \sum\limits_i^N  \frac{\alpha_i C_\text{serv,i}}{T_\text{exe,i}(\alpha_i)} \leq C_s \label{eq:opt.sum.energy.2} \\ 
\mathcal{A} & = \{\alpha_1, \alpha_2, \dots \alpha_N\} \\
0 \leq \alpha_i \leq 1 & \quad \forall i = 1 \dots N 
\end{align}
In the given optimization problem, the decision vector $\mathcal{A'}$ is evaluated such that the total energy consumption of the user device is minimized. As the objective is to reduce the energy consumption of user devices, the energy consumed at the cloud server is not considered in the optimization problem. The second constraint for optimization ensures that the total processing rate required by the user devices should be less than the total server capacity. It reflects that the sum of the allocated shares of cloud resources $\rho_i$ must not exceed $1$, i.\,e.\,  $\sum_i^N \rho_i \leq 1$. $\alpha_i$ is the offloading decision parameter for the $i^\text{th}$ user. If $\alpha_i$ = 0, the user device do not offload, while if $\alpha_i$ = 1, all data processing is performed by the cloud server.

In order to serve more user devices on the cloud, it is necessary to efficiently allocate the cloud computational resources, $\rho_i$. The idea is to allocate more cloud server resources to the user devices that experience larger communication ($T_\text{tr,i} +T_\text{rx,i}$) delay, so as to reduce their execution time. As mentioned in Section~\ref{subsec: delay:model:111}, the total delay for offloading should be less than the maximum delay threshold $T_\text{max}$. Therefore, the maximum execution time permitted by the user device is 
\begin{eqnarray}\label{eq:delay:optimization:130}
T_\text{exe,i}(\alpha_i)  \leq  T_\text{max} - \left(T_\text{tr,i}(\alpha_i)  + T_\text{rx,i}(\alpha_i)\right).
\end{eqnarray}
By substituting ~\eqref{eq:delay:optimization:130} in ~\eqref{eq:opt.sum.energy.2}, the solution to the optimization problem is evaluated. As the optimization problem is convex, the solution is obtained by applying Lagrangian's duality theorem and KKT conditions.

\subsection{Solution}
\paragraph{Optimal offloading decision}
\begin{theorem} \label{thm:offloading:1}
If the rate of increase in energy consumption for local processing is higher than for offloading, i.e,\, $\left[-E_\text{tr,i}^{'} - E_\text{u,i}^{'}\right] > 0$, the optimal offloading decision for $i^\text{th}$ user device is given by 
\begin{equation} \label{eq:delay:optimization:132}
\alpha_i =  \min\left[1, \frac{1}{\left[ \frac{D_i}{B_i R_i} + \frac{D_\text{rx,i}}{B_\text{rx,i}R_\text{rx,i}}\right]} \left(T_\text{max} -  \sqrt{  \frac{ \nu \gamma_i  T_\text{max}}{(-E_\text{tr,i}^{'} - E_\text{u,i}^{'})}} \right)^+\right],
\end{equation}
where, $ E_\text{tr,i}^{'}(\alpha_i) = \frac{\partial E_\text{tr,i}(\alpha_i)}{\partial \alpha_i}$ and $E_\text{u,i}^{'}(\alpha_i) = \frac{\partial E_\text{u,i}(\alpha_i)}{\partial \alpha_i}$, $\nu$ is the Lagrange parameter defining the threshold for admitting the user devices, and $\gamma_i$ is the ratio of computational load to the cloud server capacity, given as $\gamma_i = \frac{C_\text{serv,i}}{C_s}$. \\
If the computation load $C_\text{serv,i} \ll C_s$, $\gamma_i \to 0$ and $\nu = 0$, the optimal offloading decision becomes
\begin{align}
\alpha_i = \min\left[1, \left( \frac{T_\text{max}}{\left[ \frac{D_i}{B_i R_i} + \frac{D_\text{rx,i}}{B_\text{rx,i}R_\text{rx,i}}\right]} \right)^+\right]
\end{align}
\end{theorem}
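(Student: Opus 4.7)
The plan is to fold the delay constraint into the server--capacity constraint so that the problem becomes a pure minimization over $\alpha_1,\dots,\alpha_N$, and then derive the claimed form by Lagrangian duality with the Karush--Kuhn--Tucker (KKT) conditions. Writing $K_i := D_i/(B_i R_i) + D_\text{rx,i}/(B_\text{rx,i}R_\text{rx,i})$ for the per-unit-share communication delay, I would tighten the delay inequality at optimum (using more slack of the latency budget for execution only relaxes the capacity constraint) and substitute $T_\text{exe,i}=T_\text{max}-\alpha_i K_i$ into the capacity constraint, reducing it to $\sum_{i=1}^N \alpha_i \gamma_i/(T_\text{max}-\alpha_i K_i)\le 1$. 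Since $E_\text{u,i}$ and $E_\text{tr,i}$ are affine in $\alpha_i$, the objective is linear, and a short second-derivative check on $\alpha\mapsto \alpha/(T_\text{max}-K\alpha)$ over $[0,T_\text{max}/K)$ shows each summand of the reformulated constraint is convex. Slater's condition holds trivially at $\alpha_i=0$, so the KKT conditions are both necessary and sufficient for optimality.

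Next I would introduce a multiplier $\nu\ge 0$ for the coupled capacity constraint and multipliers $\mu_i,\lambda_i\ge 0$ for $\alpha_i\ge 0$ and $\alpha_i\le 1$, and write stationarity:
\begin{equation}
E'_\text{u,i}+E'_\text{tr,i}+\nu\gamma_i\,\frac{T_\text{max}}{(T_\text{max}-\alpha_i K_i)^2}-\mu_i+\lambda_i=0.
\end{equation}
For a user whose optimum lies strictly inside $(0,1)$, complementary slackness kills $\mu_i$ and $\lambda_i$; solving for $\alpha_i$ and taking the positive square root (forced on us because $T_\text{max}-\alpha_i K_i$ must stay positive for $T_\text{exe,i}$ to be finite) gives exactly the interior expression inside the $\min[1,\cdot]$ of the theorem. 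The hypothesis $-E'_\text{tr,i}-E'_\text{u,i}>0$ is precisely what keeps the radicand positive; it also encodes the economic condition that offloading is beneficial at the margin. The outer $\min[1,\cdot]$ absorbs the case $\lambda_i>0$ (full offloading is optimal) and the $(\cdot)^+$ absorbs $\mu_i>0$ (no offloading is optimal), so the displayed expression is exactly the projection of the interior stationary point onto $[0,1]$. The asymptotic case is then immediate: when $C_\text{serv,i}\ll C_s$ the coupling constraint is slack, complementary slackness forces $\nu=0$, and the square-root term disappears, leaving $\alpha_i=\min[1,(T_\text{max}/K_i)^+]$.

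The main obstacle is the multi-user coupling through the single dual $\nu$: unlike a per-user decoupled problem, the formula gives $\alpha_i$ only implicitly in $\nu$, which itself must be chosen so that the capacity constraint is tight on exactly the active set of users whose unconstrained optimum lies in $(0,1)$. I would address this with a water-filling-style sweep: increase $\nu$ monotonically from $0$, tracking the thresholds at which individual users first saturate at $\alpha_i=1$ and later at $\alpha_i=0$, and stop when the capacity equality is met; strict convexity of each summand guarantees the resulting $\alpha$-vector is unique. The theorem itself, however, only asserts the functional dependence of $\alpha_i$ on $\nu$ and the per-user primitives, so this bookkeeping is a separate computational task rather than part of the derivation of the closed form.
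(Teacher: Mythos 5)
Your proposal is correct and follows essentially the same route as the paper's appendix: substitute the tight delay bound $T_\text{exe,i}=T_\text{max}-\alpha_i K_i$ into the server-capacity constraint, form the Lagrangian, verify convexity of each summand, and solve the KKT stationarity condition $E'_\text{u,i}+E'_\text{tr,i}+\nu\gamma_i T_\text{max}/(T_\text{max}-\alpha_i K_i)^2=\psi_i$ for the interior point, with the positive root forced by $T_\text{max}-\alpha_i K_i>0$. Your version is marginally more complete than the paper's (you carry an explicit multiplier for $\alpha_i\le 1$ rather than appending the $\min[1,\cdot]$ clip afterwards, and your closing $\nu$-sweep is exactly the paper's Algorithm 1), but the underlying argument is the same.
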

\begin{proof}
The proof of the theorem is given in Appendix.
\end{proof}
\begin{theorem}
In case of an overloaded system, i.\,e., $\nu \neq 0$, as $\alpha_i \geq 0$, the lower bound on Lagrange parameter $\nu$ is   
\begin{align} \label{eq:delay:optimization:140}
 \min\limits_{i: \alpha_i>0} \mathcal{B} \leq \nu, 
\end{align}
where, $\mathcal{B}=\{\hat\nu_1, \hat\nu_2,\dots \hat\nu_N \}$, and
\begin{align}
\hat\nu_i = \left[ \left(T_\text{max} -  \left[ \frac{D_i}{B_i R_i} + \frac{D_\text{rx,i}}{B_\text{rx,i}R_\text{rx,i}}\right]\right)^2 \frac{(-E_\text{tr,i}^{'} - E_\text{u,i}^{'})}{\gamma_i  T_\text{max}} \right]^+
\end{align}
\end{theorem}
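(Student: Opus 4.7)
The plan is to invert the closed-form expression for $\alpha_i$ from Theorem~\ref{thm:offloading:1} and read off a lower bound on $\nu$ from the primal feasibility requirement $\alpha_i\leq 1$. For brevity write $\tau_i:=D_i/(B_iR_i)+D_\text{rx,i}/(B_\text{rx,i}R_\text{rx,i})$ and $q_i(\nu):=\sqrt{\nu\gamma_iT_\text{max}/(-E_\text{tr,i}^{'}-E_\text{u,i}^{'})}$; both quantities are well-defined since the hypothesis of Theorem~\ref{thm:offloading:1} ensures $-E_\text{tr,i}^{'}-E_\text{u,i}^{'}>0$. The closed form reads $\alpha_i=\min\bigl[1,\tau_i^{-1}(T_\text{max}-q_i(\nu))^+\bigr]$.

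I would first fix any user $i$ with $\alpha_i>0$. Positivity means the inner $(\cdot)^+$ is inactive, so $T_\text{max}>q_i(\nu)$, and the feasibility bound $\alpha_i\leq 1$ rearranges to $T_\text{max}-\tau_i\leq q_i(\nu)$. The key step is a sign-sensitive squaring. When $T_\text{max}\leq\tau_i$ the left-hand side is non-positive, the inequality is automatic, and the outer $(\cdot)^+$ in the definition of $\hat\nu_i$ collapses it to $\hat\nu_i=0\leq\nu$. When $T_\text{max}>\tau_i$ both sides are strictly positive and squaring is monotone, yielding $\nu\geq(T_\text{max}-\tau_i)^2(-E_\text{tr,i}^{'}-E_\text{u,i}^{'})/(\gamma_iT_\text{max})=\hat\nu_i$. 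Either way, any interior active user contributes a lower bound $\hat\nu_i\leq\nu$.

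To close the argument, I would invoke the overloaded hypothesis $\nu\neq 0$: complementary slackness forces the cloud-capacity constraint to be active, which in turn requires at least one $\alpha_j$ to lie strictly inside $(0,1)$; otherwise $\sum_j\rho_j$ would be piecewise constant in $\nu$ and could not equilibrate to $1$. For such a user the argument above gives $\hat\nu_j\leq\nu$, and because the minimum over the larger index set $\{j:\alpha_j>0\}$ can only shrink this bound, $\min_{j:\alpha_j>0}\hat\nu_j\leq\nu$, which is the claim.

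The main obstacle is the bookkeeping of the two positive-part operators together with the outer $\min$. For a user clipped to $\alpha_j=1$ the inequality $\alpha_j\leq 1$ is slack and the direction of the $q_j$-inequality flips, so one cannot argue componentwise for every element of $\mathcal{B}$; the remedy is to lean on the existence of an interior user in the overloaded regime and to let the outer $(\cdot)^+$ in the definition of $\hat\nu_i$ absorb the degenerate case $T_\text{max}\leq\tau_i$ in a single line.
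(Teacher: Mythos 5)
Your proposal is correct and follows essentially the same route as the paper, whose entire proof is the one-line observation that the bound follows from applying $\alpha_i\leq 1$ to the closed form of Theorem~\ref{thm:offloading:1} and rearranging for $\nu$. You simply carry out that rearrangement explicitly and, more carefully than the paper, track the sign-sensitive squaring, the positive-part operators, and the clipped case $\alpha_j=1$ via the existence of an interior user.
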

\begin{proof}
  The theorem follows from Theorem \ref{thm:offloading:1} and applying the condition $\alpha\leq 1$ in (\ref{eq:delay:optimization:132}).
\end{proof}

\paragraph{Optimal cloud resource allocation} The relative share of cloud server capacity allocated to user device $i$ is given as 
\begin{equation}
\rho_i = \frac{\alpha_i C_\text{serv,i}}{C_s \cdot T_\text{exe,i}(\alpha_i)}.
\end{equation}
Hence, $\rho_i$ and $T_\text{exe,i}$ are a function of the optimal offloading decision $\alpha_i$. If the communication delay is higher then the execution time should be lower and the assigned computational share $\rho_i$ should be higher. In addition, the computational share $\rho_i$ scales linearly with the required computational load $\alpha_i C_\text{serv,i}$.

\subsection{Performance metrics}
\paragraph{Offloading percentage}
The offloading percentage is the ratio of total offloaded data processing for all user devices to the total data processing of the system, and it is given by 
\begin{align}
\Lambda =  \frac{\sum\limits_{i=1}^N \alpha_i \cdot D_i}{ \sum\limits_{i=1}^N D_i}.
\end{align}

\paragraph{Sum energy}
The performance of the offloading strategy is evaluated by comparing the total optimized energy consumption for all $N$ user devices to the total energy consumption for local processing. The total optimized energy consumption, i.\,e.,  $E_\text{sum}(\mathcal{A}')$, is given as
\begin{equation}
E_\text{sum}(\mathcal{A}') = \sum\limits_{i=1}^N \: E_\text{sum,i}(\alpha_i)
\end{equation}
where $\alpha_i$ is evaluated according to Theorem~\ref{thm:offloading:1} and ~\eqref{eq:delay:optimization:132}.
The total energy consumption in the case that no user device offloads ($\forall i \in [1; \dots N]: \alpha_i = 0$) is given by 
\begin{equation}
E_\text{sum}(\underline{0}) = \sum\limits_i^N \: E_\text{u,i}(0).
\end{equation}
\paragraph{Cut-off delay $T_c$} The minimum threshold delay within which the system can process all the optimally offloaded data from $N$ user devices, in presence of bandwidth $B$. 
\begin{align*}
T_c(B,N) = \inf \left\lbrace T_\text{max}>0 \biggl| \frac{\partial\Lambda(T_\text{max}, B, N)}{\partial T_\text{max}} = 0 \right\rbrace 
\end{align*}
%%%%%%%%%%%%%%%%%%%%%%%%%%%%%%%%%%%%%%%%%%%%%%%%%%%%%%%%%%%%%%%%%%%%%%%%%%
% Algorithmic Complexity
%%%%%%%%%%%%%%%%%%%%%%%%%%%%%%%%%%%%%%%%%%%%%%%%%%%%%%%%%%%%%%%%%%%%%%%%%%
\begin{algorithm}
\caption{Optimal Cloud Offloading}
\label{optimization}
\begin{algorithmic}
\Initialize{$\nu = 0$; $\alpha_i$ = 1, $\forall \: i = 1 \ldots N$; $\mathcal{B}=\{\hat\nu_1, \hat\nu_2,\dots \hat\nu_N\}$}\\
%\Procedure{CH\textendash Election}{}
\mbox{Check on energy consumption:}
\For{User device i = 1:N}
\If{$\left[-E_\text{tr,i}^{'} - E_\text{u,i}^{'}\right] > 0$}
\State $\alpha_i$ = 1 \Comment{Offload}
%\State Calculate $lb_i$
\Else \State $\alpha_i$ = 0 \Comment{Local processing}
\EndIf
\EndFor \\
\mbox{Check load on the cloud server:}\\
\While{$\left( \sum\limits_i^N  \frac{  \alpha_i \gamma_i}{ \left[ T_\text{max} - (T_\text{tr,i}(\alpha_i) + T_\text{rx,i}(\alpha_i)) \right]} - 1 \right)$ > 0}\\
\State $\nu$ = $\min\limits_{\forall i= 1\dots N; \alpha_i>0} \{\mathcal{B}|\mathcal{B} > 0\}$ \\
\State Drop the user device that have highest communication delay and saves least energy, i.\,e.\, $\alpha_i = 0$ \\
\Update{$\mathcal{B}$, s.t $\mathcal{B} = \mathcal{B} - \{\nu \} $}
\State Assign $\alpha_i$ with new value of $\nu$, $\forall i = 1\dots N$
\begin{align}
\alpha_i =  \min \left[1, \frac{ \left(T_\text{max} -  \sqrt{ \frac{ \nu \gamma_i  T_\text{max}}{(-E_\text{tr,i}^{'} - E_\text{u,i}^{'})}} \right)^+}{\left[ \frac{D_i}{B_i R_i} + \frac{D_\text{rx,i}}{B_\text{rx,i}R_\text{rx,i}}\right]}\right] \nonumber
\end{align}
\EndWhile
\Output{$\mathcal{A'}$ and $\nu$} 
\end{algorithmic}
\end{algorithm}

\begin{figure}[!t]
\centering
\begingroup
\unitlength=1mm
\psset{xunit=3.59116mm, yunit=1.00000mm, linewidth=0.1mm}
\psset{arrowsize=2pt 3, arrowlength=1.4, arrowinset=.4}\psset{axesstyle=frame}
\begin{pspicture}(-2.17692, -16.00000)(20.10000, 50.00000)
\rput(-0.55692, -5.00000){%
\psaxes[subticks=0, labels=all, xsubticks=1, ysubticks=1, Ox=2, Oy=0, Dx=2, Dy=5]{-}(2.00000, 0.00000)(2.00000, 0.00000)(20.10000, 50.00000)%
\multips(4.00000, 0.00000)(2.00000, 0.0){8}{\psline[linecolor=black, linestyle=dotted, linewidth=0.2mm](0, 0)(0, 50.00000)}
\multips(2.00000, 5.00000)(0, 5.00000){9}{\psline[linecolor=black, linestyle=dotted, linewidth=0.2mm](0, 0)(18.10000, 0)}
\rput[b](11.05000, -11.00000){$ T_{max} \left[\text{ms}\right]  $}
\rput[t]{90}(-1.62000, 25.00000){$\Lambda$}
\psclip{\psframe(2.00000, 0.00000)(20.10000, 50.00000)}
\psline[linecolor=blue, plotstyle=curve, linewidth=0.4mm, showpoints=true, linestyle=solid, linecolor=blue, dotstyle=o, dotscale=1.2 1.2, linewidth=0.4mm](1.00000, 0.00003)(2.00000, 0.00236)(3.00000, 0.01194)(4.00000, 0.03327)(5.00000, 0.06640)(6.00000, 0.11673)(7.00000, 0.19091)(8.00000, 0.29850)(9.00000, 0.45086)(10.00000, 0.65181)(11.00000, 0.90978)(12.00000, 1.21979)(13.00000, 1.54257)(14.00000, 1.86536)(15.00000, 2.18814)(16.00000, 2.51092)(17.00000, 2.83371)(18.00000, 3.15649)(19.00000, 3.47927)(20.00000, 3.80206)
\psline[linecolor=red, plotstyle=curve, linewidth=0.4mm, showpoints=true, linestyle=solid, linecolor=red, dotstyle=diamond, dotscale=1.2 1.2, linewidth=0.4mm](1.00000, 0.00974)(2.00000, 0.09920)(3.00000, 0.38101)(4.00000, 0.97668)(5.00000, 1.96683)(6.00000, 3.30330)(7.00000, 4.73390)(8.00000, 6.16449)(9.00000, 7.59508)(10.00000, 9.02244)(11.00000, 10.44684)(12.00000, 13.41000)(13.00000, 13.41000)(14.00000, 13.41000)(15.00000, 13.41000)(16.00000, 13.41000)(17.00000, 13.41000)(18.00000, 13.41000)(19.00000, 13.41000)(20.00000, 13.41000)
\rput[l](11.00000, 15.00000){$T_c$}
\psline[linecolor=darkgreen, plotstyle=curve, linewidth=0.4mm, showpoints=true, linestyle=solid, linecolor=darkgreen, dotstyle=triangle, dotscale=1.2 1.2, linewidth=0.4mm](1.00000, 0.08522)(2.00000, 0.84022)(3.00000, 2.84468)(4.00000, 5.90265)(5.00000, 9.24148)(6.00000, 12.58030)(7.00000, 15.90180)(8.00000, 20.51600)(9.00000, 20.51600)(10.00000, 20.51600)(11.00000, 20.51600)(12.00000, 20.51600)(13.00000, 20.51600)(14.00000, 20.51600)(15.00000, 20.51600)(16.00000, 20.51600)(17.00000, 20.51600)(18.00000, 20.51600)(19.00000, 20.51600)(20.00000, 20.51600)
\psline[linecolor=cyan, plotstyle=curve, linewidth=0.4mm, showpoints=true, linestyle=solid, linecolor=cyan, dotstyle=square*, dotscale=1.2 1.2, linewidth=0.4mm](1.00000, 0.42119)(2.00000, 3.29106)(3.00000, 8.58162)(4.00000, 14.43987)(5.00000, 20.27865)(6.00000, 27.18400)(7.00000, 27.18400)(8.00000, 27.18400)(9.00000, 27.18400)(10.00000, 27.18400)(11.00000, 27.18400)(12.00000, 27.18400)(13.00000, 27.18400)(14.00000, 27.18400)(15.00000, 27.18400)(16.00000, 27.18400)(17.00000, 27.18400)(18.00000, 27.18400)(19.00000, 27.18400)(20.00000, 27.18400)
\psline[linecolor=darkred, plotstyle=curve, linewidth=0.4mm, showpoints=true, linestyle=solid, linecolor=darkred, dotstyle=triangle*, dotscale=1.2 1.2, linewidth=0.4mm](1.00000, 1.20893)(2.00000, 7.62826)(3.00000, 16.50227)(4.00000, 25.39196)(5.00000, 33.47800)(6.00000, 33.47800)(7.00000, 33.47800)(8.00000, 33.47800)(9.00000, 33.47800)(10.00000, 33.47800)(11.00000, 33.47800)(12.00000, 33.47800)(13.00000, 33.47800)(14.00000, 33.47800)(15.00000, 33.47800)(16.00000, 33.47800)(17.00000, 33.47800)(18.00000, 33.47800)(19.00000, 33.47800)(20.00000, 33.47800)
\endpsclip
\psframe[linecolor=black, fillstyle=solid, fillcolor=white, shadowcolor=lightgray, shadowsize=1mm, shadow=true](8.96154, 37.00000)(18.70769, 50.00000)
\rput[l](11.46769, 47.00000){\footnotesize{$\text{20} \% \text{B} $}}
\psline[linecolor=blue, linestyle=solid, linewidth=0.3mm](9.51846, 47.00000)(10.63231, 47.00000)
\psline[linecolor=blue, linestyle=solid, linewidth=0.3mm](9.51846, 47.00000)(10.63231, 47.00000)
\psdots[linecolor=blue, linestyle=solid, linewidth=0.3mm, dotstyle=o, dotscale=1.2 1.2, linecolor=blue](10.07538, 47.00000)
\rput[l](11.46769, 43.50000){\footnotesize{$\text{40} \% \text{B} $}}
\psline[linecolor=red, linestyle=solid, linewidth=0.3mm](9.51846, 43.50000)(10.63231, 43.50000)
\psline[linecolor=red, linestyle=solid, linewidth=0.3mm](9.51846, 43.50000)(10.63231, 43.50000)
\psdots[linecolor=red, linestyle=solid, linewidth=0.3mm, dotstyle=diamond, dotscale=1.2 1.2, linecolor=red](10.07538, 43.50000)
\rput[l](11.46769, 40.00000){\footnotesize{$\text{60} \%  \text{B} $}}
\psline[linecolor=darkgreen, linestyle=solid, linewidth=0.3mm](9.51846, 40.00000)(10.63231, 40.00000)
\psline[linecolor=darkgreen, linestyle=solid, linewidth=0.3mm](9.51846, 40.00000)(10.63231, 40.00000)
\psdots[linecolor=darkgreen, linestyle=solid, linewidth=0.3mm, dotstyle=triangle, dotscale=1.2 1.2, linecolor=darkgreen](10.07538, 40.00000)
\rput[l](15.64462, 47.00000){\footnotesize{$\text{80} \% \text{B} $}}
\psline[linecolor=cyan, linestyle=solid, linewidth=0.3mm](13.69538, 47.00000)(14.80923, 47.00000)
\psline[linecolor=cyan, linestyle=solid, linewidth=0.3mm](13.69538, 47.00000)(14.80923, 47.00000)
\psdots[linecolor=cyan, linestyle=solid, linewidth=0.3mm, dotstyle=square*, dotscale=1.2 1.2, linecolor=cyan](14.25231, 47.00000)
\rput[l](15.64462, 43.50000){\footnotesize{$\text{100} \%  \text{B} $}}
\psline[linecolor=darkred, linestyle=solid, linewidth=0.3mm](13.69538, 43.50000)(14.80923, 43.50000)
\psline[linecolor=darkred, linestyle=solid, linewidth=0.3mm](13.69538, 43.50000)(14.80923, 43.50000)
\psdots[linecolor=darkred, linestyle=solid, linewidth=0.3mm, dotstyle=triangle*, dotscale=1.2 1.2, linecolor=darkred](14.25231, 43.50000)
}\end{pspicture}
\endgroup
 
\caption{Offloading percentage} % x logx complexity
\label{fig:Delay_constraint:1}
\end{figure}
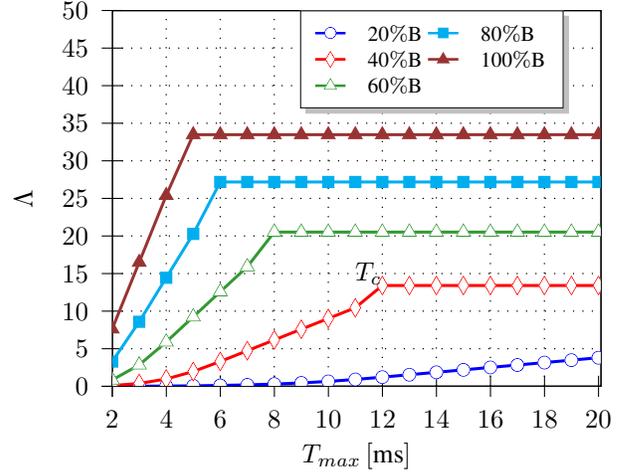

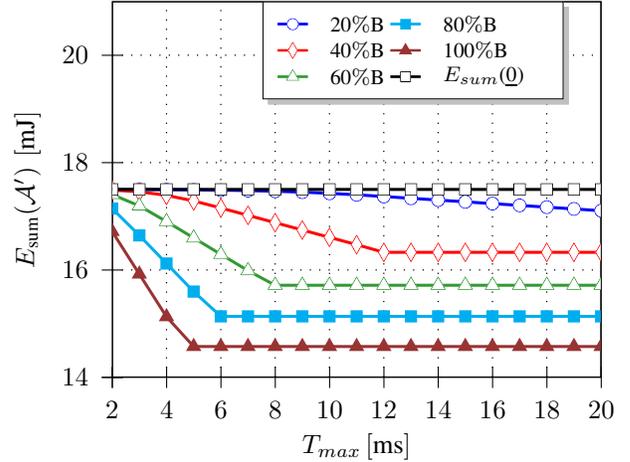
\begin{figure}[!t]
\centering
\begingroup
\unitlength=1mm
\psset{xunit=3.61111mm, yunit=7.14286mm, linewidth=0.1mm}
\psset{arrowsize=2pt 3, arrowlength=1.4, arrowinset=.4}\psset{axesstyle=frame}
\begin{pspicture}(-2.15385, 11.76000)(20.00000, 21.00000)
\rput(-0.55385, -0.70000){%
\psaxes[subticks=0, labels=all, xsubticks=1, ysubticks=1, Ox=2, Oy=14, Dx=2, Dy=2]{-}(2.00000, 14.00000)(2.00000, 14.00000)(20.00000, 21.00000)%
\multips(4.00000, 14.00000)(2.00000, 0.0){8}{\psline[linecolor=black, linestyle=dotted, linewidth=0.2mm](0, 0)(0, 7.00000)}
\multips(2.00000, 16.00000)(0, 2.00000){3}{\psline[linecolor=black, linestyle=dotted, linewidth=0.2mm](0, 0)(18.00000, 0)}
\rput[b](11.00000, 12.46000){$T_{max} \left[\text{ms}\right]   $}
\rput[t]{90}(-1.60000, 17.50000){$ E_\text{sum}(\mathcal{A'}) \: \left[\text{mJ}\right]$}
\psclip{\psframe(2.00000, 14.00000)(20.00000, 21.00000)}
\psline[linecolor=blue, plotstyle=curve, linewidth=0.4mm, showpoints=true, linestyle=solid, linecolor=blue, dotstyle=o, dotscale=1.2 1.2, linewidth=0.4mm](1.00000, 17.50000)(2.00000, 17.49970)(3.00000, 17.49852)(4.00000, 17.49604)(5.00000, 17.49224)(6.00000, 17.48658)(7.00000, 17.47830)(8.00000, 17.46625)(9.00000, 17.44930)(10.00000, 17.42740)(11.00000, 17.39999)(12.00000, 17.36803)(13.00000, 17.33497)(14.00000, 17.30192)(15.00000, 17.26886)(16.00000, 17.23580)(17.00000, 17.20274)(18.00000, 17.16968)(19.00000, 17.13662)(20.00000, 17.10357)
\psline[linecolor=red, plotstyle=curve, linewidth=0.4mm, showpoints=true, linestyle=solid, linecolor=red, dotstyle=diamond, dotscale=1.2 1.2, linewidth=0.4mm](1.00000, 17.49887)(2.00000, 17.48835)(3.00000, 17.45592)(4.00000, 17.39023)(5.00000, 17.28572)(6.00000, 17.15441)(7.00000, 17.01765)(8.00000, 16.88089)(9.00000, 16.74413)(10.00000, 16.60755)(11.00000, 16.47119)(12.00000, 16.32862)(13.00000, 16.32862)(14.00000, 16.32862)(15.00000, 16.32862)(16.00000, 16.32862)(17.00000, 16.32862)(18.00000, 16.32862)(19.00000, 16.32862)(20.00000, 16.32862)
\psline[linecolor=darkgreen, plotstyle=curve, linewidth=0.4mm, showpoints=true, linestyle=solid, linecolor=darkgreen, dotstyle=triangle, dotscale=1.2 1.2, linewidth=0.4mm](1.00000, 17.48990)(2.00000, 17.40578)(3.00000, 17.19360)(4.00000, 16.90208)(5.00000, 16.59682)(6.00000, 16.29156)(7.00000, 15.98707)(8.00000, 15.71498)(9.00000, 15.71498)(10.00000, 15.71498)(11.00000, 15.71498)(12.00000, 15.71498)(13.00000, 15.71498)(14.00000, 15.71498)(15.00000, 15.71498)(16.00000, 15.71498)(17.00000, 15.71498)(18.00000, 15.71498)(19.00000, 15.71498)(20.00000, 15.71498)
\psline[linecolor=cyan, plotstyle=curve, linewidth=0.4mm, showpoints=true, linestyle=solid, linecolor=cyan, dotstyle=square*, dotscale=1.2 1.2, linewidth=0.4mm](1.00000, 17.45146)(2.00000, 17.14399)(3.00000, 16.64415)(4.00000, 16.12033)(5.00000, 15.59728)(6.00000, 15.13567)(7.00000, 15.13567)(8.00000, 15.13567)(9.00000, 15.13567)(10.00000, 15.13567)(11.00000, 15.13567)(12.00000, 15.13567)(13.00000, 15.13567)(14.00000, 15.13567)(15.00000, 15.13567)(16.00000, 15.13567)(17.00000, 15.13567)(18.00000, 15.13567)(19.00000, 15.13567)(20.00000, 15.13567)
\psline[linecolor=darkred, plotstyle=curve, linewidth=0.4mm, showpoints=true, linestyle=solid, linecolor=darkred, dotstyle=triangle*, dotscale=1.2 1.2, linewidth=0.4mm](1.00000, 17.36344)(2.00000, 16.71632)(3.00000, 15.92241)(4.00000, 15.12874)(5.00000, 14.57401)(6.00000, 14.57401)(7.00000, 14.57401)(8.00000, 14.57401)(9.00000, 14.57401)(10.00000, 14.57401)(11.00000, 14.57401)(12.00000, 14.57401)(13.00000, 14.57401)(14.00000, 14.57401)(15.00000, 14.57401)(16.00000, 14.57401)(17.00000, 14.57401)(18.00000, 14.57401)(19.00000, 14.57401)(20.00000, 14.57401)
\psline[linecolor=black, plotstyle=curve, linewidth=0.4mm, showpoints=true, linestyle=solid, linecolor=black, dotstyle=square, dotscale=1.2 1.2, linewidth=0.4mm](1.00000, 17.50000)(2.00000, 17.50000)(3.00000, 17.50000)(4.00000, 17.50000)(5.00000, 17.50000)(6.00000, 17.50000)(7.00000, 17.50000)(8.00000, 17.50000)(9.00000, 17.50000)(10.00000, 17.50000)(11.00000, 17.50000)(12.00000, 17.50000)(13.00000, 17.50000)(14.00000, 17.50000)(15.00000, 17.50000)(16.00000, 17.50000)(17.00000, 17.50000)(18.00000, 17.50000)(19.00000, 17.50000)(20.00000, 17.50000)
\endpsclip
\psframe[linecolor=black, fillstyle=solid, fillcolor=white, shadowcolor=lightgray, shadowsize=1mm, shadow=true](7.53846, 19.18000)(18.61538, 21.00000)
\rput[l](10.03077, 20.58000){\footnotesize{$\text{20} \% \text{B} $}}
\psline[linecolor=blue, linestyle=solid, linewidth=0.3mm](8.09231, 20.58000)(9.20000, 20.58000)
\psline[linecolor=blue, linestyle=solid, linewidth=0.3mm](8.09231, 20.58000)(9.20000, 20.58000)
\psdots[linecolor=blue, linestyle=solid, linewidth=0.3mm, dotstyle=o, dotscale=1.2 1.2, linecolor=blue](8.64615, 20.58000)
\rput[l](10.03077, 20.09000){\footnotesize{$\text{40} \% \text{B} $}}
\psline[linecolor=red, linestyle=solid, linewidth=0.3mm](8.09231, 20.09000)(9.20000, 20.09000)
\psline[linecolor=red, linestyle=solid, linewidth=0.3mm](8.09231, 20.09000)(9.20000, 20.09000)
\psdots[linecolor=red, linestyle=solid, linewidth=0.3mm, dotstyle=diamond, dotscale=1.2 1.2, linecolor=red](8.64615, 20.09000)
\rput[l](10.03077, 19.60000){\footnotesize{$\text{60} \% \text{B} $}}
\psline[linecolor=darkgreen, linestyle=solid, linewidth=0.3mm](8.09231, 19.60000)(9.20000, 19.60000)
\psline[linecolor=darkgreen, linestyle=solid, linewidth=0.3mm](8.09231, 19.60000)(9.20000, 19.60000)
\psdots[linecolor=darkgreen, linestyle=solid, linewidth=0.3mm, dotstyle=triangle, dotscale=1.2 1.2, linecolor=darkgreen](8.64615, 19.60000)
\rput[l](14.18462, 20.58000){\footnotesize{$\text{80} \% \text{B} $}}
\psline[linecolor=cyan, linestyle=solid, linewidth=0.3mm](12.24615, 20.58000)(13.35385, 20.58000)
\psline[linecolor=cyan, linestyle=solid, linewidth=0.3mm](12.24615, 20.58000)(13.35385, 20.58000)
\psdots[linecolor=cyan, linestyle=solid, linewidth=0.3mm, dotstyle=square*, dotscale=1.2 1.2, linecolor=cyan](12.80000, 20.58000)
\rput[l](14.18462, 20.09000){\footnotesize{$\text{100} \% \text{B} $}}
\psline[linecolor=darkred, linestyle=solid, linewidth=0.3mm](12.24615, 20.09000)(13.35385, 20.09000)
\psline[linecolor=darkred, linestyle=solid, linewidth=0.3mm](12.24615, 20.09000)(13.35385, 20.09000)
\psdots[linecolor=darkred, linestyle=solid, linewidth=0.3mm, dotstyle=triangle*, dotscale=1.2 1.2, linecolor=darkred](12.80000, 20.09000)
\rput[l](14.18462, 19.60000){\footnotesize{$\text{$E_{sum}$(\underline{0})} $}}
\psline[linecolor=black, linestyle=solid, linewidth=0.3mm](12.24615, 19.60000)(13.35385, 19.60000)
\psline[linecolor=black, linestyle=solid, linewidth=0.3mm](12.24615, 19.60000)(13.35385, 19.60000)
\psdots[linecolor=black, linestyle=solid, linewidth=0.3mm, dotstyle=square, dotscale=1.2 1.2, linecolor=black](12.80000, 19.60000)
}\end{pspicture}
\endgroup
 
\caption{Energy consumption} % x logx complexity
\label{fig:Delay_constraint:2}
\end{figure}
%
% \begin{figure}[!b]
% \centering
% \input{figures/BW_T_max_users_OFF_PERC_v1_3.tex}
% \caption{Offloading percentage with increasing users} % x logx complexity
% \label{fig:Delay_constraint:6}
% \end{figure}
%%%%%%%%%%%%%%%%%%%%%%%%%%%%%
%%%%%%%%%%%%%%%%%%%%%%%%%%%%%%%%%%%%%%%%%%%%%%%%%%%%%%%%%%%%%%%
%Simulation setup
%%%%%%%%%%%%%%%%%%%%%%%%%%%%%%%%%%%%%%%%%%%%%%%%%%%%%%%%%%%%%
\begin{table}[!t]
  \caption{Simulation Parameters}
  \label{tb:simulation:parameters}
  \centering
  \begin{tabular}{|c||c||c||c|}
    \hline 
    Variable & Value & Variable & Value \\ \hline
    \hline 
    $C_{s}$ & 200 MHz & $N$ & 50 Users \\
    \hline
    L & 10 & $M$ & 70 data elements \\
    \hline
    $B$, $B_\text{rx}$ & 20 MHz & $S$,$S_\text{rx}$ & 8 bits\\
    \hline
%     $T_\text{pr}$ & 1 ms & $T$ & 20 ms\\
%     \hline
        $d_0$ & 200 m & $R$ & 800 m \\
    \hline
       $\beta$ & 2 &  $R_i, R_\text{rx}$ & $6$ bps/Hz \\
    \hline
    $\epsilon_i$ & $5e{-6}$ mJ & $\eta_i$ & 100 cycles \\
    \hline
     $f_i(M)$ & $M$  & $\eta_s$ & 1 cycle \\
%     \hline
%      & G & 1 \\
    \hline
  \end{tabular}
\end{table}

%%%%%%%%%%%%%%%%%%%%%%%%%%%%%%%%% results %%%%%%%%%%%%%%%%%%%%%%%%%%%%%%%%%%%%%%%%%%
\section{Results and Discussions} \label{sec:results}
\paragraph{Optimal offloading strategy and energy consumption}
Fig.~\ref{fig:Delay_constraint:1} shows the effect on the  average offloading percentage $\Lambda$, with an increasing delay $T_\text{max}$, and for different bandwidth $B$ available. The simulation parameters used for this evaluation are shown in Table \ref{tb:simulation:parameters}. \\
To optimally offload the data from the user devices, two crucial goals need to be achieved: a) save energy of the user device, and b) finish the computation within $T_\text{max}$. Consider the case when $40\% $ of bandwidth $B$ is available. At lower values of threshold delay, i.\,e.\, $T_\text{max} < \unit{12}{ms}$, the cloud server cannot process all the computational data within the given time constraint. Therefore, user devices only offload the data partially, ensuring that the offloaded data is computed within the specified delay. Therefore, more data is offloaded as the delay threshold is increased until the cut-off delay $T_c$ is reached. A further increase in the delay threshold $T_\text{max} > \unit{12}{ms}$ do not impact the offloading percentage $\Lambda$, which remains constant and is limited below $15\%$. This indicates that the cloud server could process more data for $T_\text{max} > \unit{12}{ms}$ but due to the limited bandwidth, it is not energy-efficient to offload more data. Therefore, the user devices do not offload, even for the lenient time constraints.
The optimal offloading percentage increases, if the percentage of available bandwidth is increased. Higher the bandwidth available, lower will be the data transmission time, and hence, the end-end delay. Therefore, at higher bandwidth ($60\%B$ and $80\%B$), the cut-off delay $T_c$ is reduced.  

Similar behavior is seen in Fig.~\ref{fig:Delay_constraint:2}. The energy consumption decreases, with an increase in the delay threshold $T_\text{max}$. This implies that energy consumption can be further reduced, if the cloud server can process more computation within the threshold delay. However, more computation could not be offloaded due to limited cloud server capacity and bandwidth. Therefore, at lower threshold delay, i.\,e., $T_\text{max}<T_c$, more energy can be saved by increasing the cloud server capacity. An increase in cloud server capacity will reduce the processing time, and hence will allow user devices to offload more computation. Whereas, if the threshold delay is higher, i.\,.e., $T_\text{max}>T_c$, the cloud server can effortlessly process more computation. An increase in cloud server capacity will have no impact on the energy consumption. Therefore, the energy consumption can only be further reduced by increasing the bandwidth.  

% \begin{figure*}
% \centering
% \begin{subfigure}
% %\centering
% \includegraphics[scale=0.5]{}
% %\caption{Offloading percentage} % x logx complexity
% %\label{fig:Delay_constraint:1}
% \end{subfigure}
% \hfill 
% \begin{subfigure}
% % \centering
%     \includegraphics[scale=0.5]{}
%    % \caption{Energy Consumption}
%     %\label{fig:Delay_constraint:2}
% \end{subfigure}
% \end{figure*}

% \paragraph{Offloading strategy with increasing number of users}
% Fig.~\ref{fig:Delay_constraint:6} illustrates the effect on the optimal offloading strategy for two different conditions; for increasing bandwidth and for increasing the delay threshold $T_\text{max}$. The offloading percentage decreases, if the number of users devices increases. The reason for such a drop in offloading is the bandwidth limitation and the tight latency constraint. With an increasing number of user devices, less bandwidth will be allocated to each user device, eventually increasing the transmitting and receiving time. Moreover, if the delay threshold for offloading is lower, the total communication delay, $T_\text{tr,i} + T_\text{rx,i}$, might exceed the delay threshold. This causes some of the user devices to refrain from offloading at all. Therefore, as seen in Fig.~\ref{fig:Delay_constraint:6}, at lower delay threshold ($T_\text{max} = \unit{2}{ms}$), the offloading percentage drops at higher rate with respect to an increase of the number of user devices. 

%%%%%%%%%%%%%%%%%%%%%%%%%%% optional result figure %%%%%%%%%%%%%%%%%%%%%%%%%%%%%%%%%%%%%%%%%%%%%
\paragraph{Bandwidth-delay-user devices trade-off}
\begin{figure}[!t]
\centering
\begingroup
\unitlength=1mm
\psset{xunit=0.81250mm, yunit=0.83333mm, linewidth=0.1mm}
\psset{arrowsize=2pt 3, arrowlength=1.4, arrowinset=.4}\psset{axesstyle=frame}
\begin{pspicture}(1.53846, -19.20000)(100.00000, 60.00000)
\rput(-2.46154, -6.00000){%
\psaxes[subticks=0, labels=all, xsubticks=1, ysubticks=1, Ox=20, Oy=0, Dx=20, Dy=10]{-}(20.00000, 0.00000)(20.00000, 0.00000)(100.00000, 60.00000)%
\multips(40.00000, 0.00000)(20.00000, 0.0){3}{\psline[linecolor=black, linestyle=dotted, linewidth=0.2mm](0, 0)(0, 60.00000)}
\multips(20.00000, 10.00000)(0, 10.00000){5}{\psline[linecolor=black, linestyle=dotted, linewidth=0.2mm](0, 0)(80.00000, 0)}
\rput[b](60.00000, -13.20000){$ \text{ Percentage of bandwidth utilized} $}
\rput[t]{90}(4.00000, 30.00000){$T_c$  [ms]}
\psclip{\psframe(20.00000, 0.00000)(100.00000, 60.00000)}
\psline[linecolor=blue, plotstyle=curve, linewidth=0.4mm, showpoints=true, linestyle=solid, linecolor=blue, dotstyle=o, dotscale=1.2 1.2, linewidth=0.4mm](20.00000, 10.00000)(40.00000, 5.00000)(60.00000, 4.00000)(80.00000, 3.00000)(100.00000, 2.00000)
\psline[linecolor=red, plotstyle=curve, linewidth=0.4mm, showpoints=true, linestyle=solid, linecolor=red, dotstyle=diamond, dotscale=1.2 1.2, linewidth=0.4mm](20.00000, 19.00000)(40.00000, 10.00000)(60.00000, 7.00000)(80.00000, 5.00000)(100.00000, 4.00000)
\psline[linecolor=darkgreen, plotstyle=curve, linewidth=0.4mm, showpoints=true, linestyle=solid, linecolor=darkgreen, dotstyle=triangle, dotscale=1.2 1.2, linewidth=0.4mm](20.00000, 29.00000)(40.00000, 15.00000)(60.00000, 10.00000)(80.00000, 8.00000)(100.00000, 6.00000)
\psline[linecolor=cyan, plotstyle=curve, linewidth=0.4mm, showpoints=true, linestyle=solid, linecolor=cyan, dotstyle=square*, dotscale=1.2 1.2, linewidth=0.4mm](20.00000, 38.00000)(40.00000, 19.00000)(60.00000, 13.00000)(80.00000, 10.00000)(100.00000, 8.00000)
\psline[linecolor=darkred, plotstyle=curve, linewidth=0.4mm, showpoints=true, linestyle=solid, linecolor=darkred, dotstyle=triangle*, dotscale=1.2 1.2, linewidth=0.4mm](20.00000, 47.00000)(40.00000, 24.00000)(60.00000, 16.00000)(80.00000, 12.00000)(100.00000, 10.00000)
\endpsclip
\psframe[linecolor=black, fillstyle=solid, fillcolor=white, shadowcolor=lightgray, shadowsize=1mm, shadow=true](63.07692, 36.00000)(93.84615, 60.00000)
\rput[l](74.15385, 56.40000){\footnotesize{$N = 20$}}
\psline[linecolor=blue, linestyle=solid, linewidth=0.3mm](65.53846, 56.40000)(70.46154, 56.40000)
\psline[linecolor=blue, linestyle=solid, linewidth=0.3mm](65.53846, 56.40000)(70.46154, 56.40000)
\psdots[linecolor=blue, linestyle=solid, linewidth=0.3mm, dotstyle=o, dotscale=1.2 1.2, linecolor=blue](68.00000, 56.40000)
\rput[l](74.15385, 52.20000){\footnotesize{$N = 40$}}
\psline[linecolor=red, linestyle=solid, linewidth=0.3mm](65.53846, 52.20000)(70.46154, 52.20000)
\psline[linecolor=red, linestyle=solid, linewidth=0.3mm](65.53846, 52.20000)(70.46154, 52.20000)
\psdots[linecolor=red, linestyle=solid, linewidth=0.3mm, dotstyle=diamond, dotscale=1.2 1.2, linecolor=red](68.00000, 52.20000)
\rput[l](74.15385, 48.00000){\footnotesize{$N = 60  $}}
\psline[linecolor=darkgreen, linestyle=solid, linewidth=0.3mm](65.53846, 48.00000)(70.46154, 48.00000)
\psline[linecolor=darkgreen, linestyle=solid, linewidth=0.3mm](65.53846, 48.00000)(70.46154, 48.00000)
\psdots[linecolor=darkgreen, linestyle=solid, linewidth=0.3mm, dotstyle=triangle, dotscale=1.2 1.2, linecolor=darkgreen](68.00000, 48.00000)
\rput[l](74.15385, 43.80000){\footnotesize{$N = 80 $}}
\psline[linecolor=cyan, linestyle=solid, linewidth=0.3mm](65.53846, 43.80000)(70.46154, 43.80000)
\psline[linecolor=cyan, linestyle=solid, linewidth=0.3mm](65.53846, 43.80000)(70.46154, 43.80000)
\psdots[linecolor=cyan, linestyle=solid, linewidth=0.3mm, dotstyle=square*, dotscale=1.2 1.2, linecolor=cyan](68.00000, 43.80000)
\rput[l](74.15385, 39.60000){\footnotesize{$N = 100 $}}
\psline[linecolor=darkred, linestyle=solid, linewidth=0.3mm](65.53846, 39.60000)(70.46154, 39.60000)
\psline[linecolor=darkred, linestyle=solid, linewidth=0.3mm](65.53846, 39.60000)(70.46154, 39.60000)
\psdots[linecolor=darkred, linestyle=solid, linewidth=0.3mm, dotstyle=triangle*, dotscale=1.2 1.2, linecolor=darkred](68.00000, 39.60000)
}\end{pspicture}
\endgroup
 
\caption{Bandwidth-delay trade-off} % x logx complexity
\label{fig:Delay_constraint:5}
\end{figure}
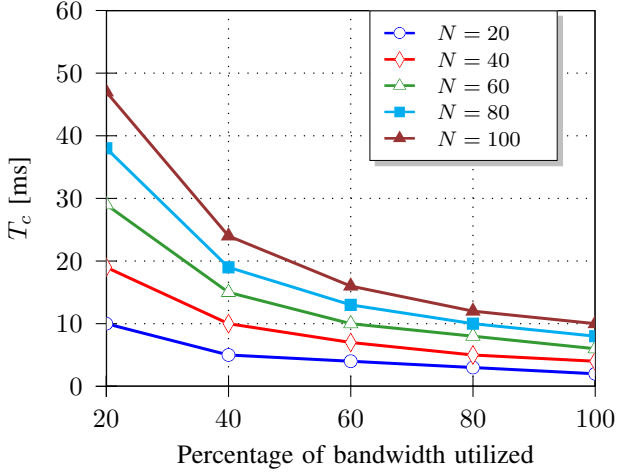
%Fig.~\ref{fig:Delay_constraint:4} and Fig.~\ref{fig:Delay_constraint:5}, shows the cutoff delay, for the cloud to completely serve $N$ user devices. 
The results discussed previously show that there exists a trade-off between  bandwidth, $B$, number of users $N$ that can offload their computation, and cut-off delay $T_c$. If the number of user device increases, more bandwidth is required, and processing at the cloud server will take longer time. Whereas, if the bandwidth increases, the cut-off delay decreases, and more user devices can offload. Fig.~\ref{fig:Delay_constraint:5} demonstrates the behavior of this trade-off. The aim is to serve the maximum number of user devices with minimum bandwidth, at very low cut-off delay. 
%In Fig.~\ref{fig:Delay_constraint:4}, the cut-off delay increases linearly with the number of user devices. The rate of increase of cut-off delay with respect to number of user devices is higher, if the bandwidth available is less. This is due to the substantial increase in transmitting and receiving delay, as the bandwidth available per user is reduced. If the bandwidth is increased, the rate at which the cut-off delay increases, is less, because sufficient bandwidth is available for transmitting and receiving the data. However, the cut-off delay still shows marginal increment with the increase in number of user devices. This is due to limitation on the execution time. If more user-devices offload, the computation load at the cloud server will increase, and the processing time will be higher. 
Fig.~\ref{fig:Delay_constraint:5} illustrates the rate at which the cut-off delay decreases with respect to bandwidth. The cut-off delay decreases exponentially with an increase in bandwidth. However, when the number of user devices is lower, bandwidth has less impact on the cut-off delay. This is due to the fact that the time to transmit and receive is significantly lower compared to the execution time.

%Therefore at lower number of user devices, the cutoff delay can be reduced by increasing the cloud server capacity, rather than increasing the bandwidth. 
%Fig.~\ref{fig:Delay_constraint:3} shows the optimal offloading strategy with increasing the number of users. 
% under lenient delay constraints, the cloud server can process more computational data from the user devices. However, the offloading percentage shows constant behavior with further increase in $T_\text{max}$. At this point, the transmission energy exceeds the energy consumed for processing the data locally. The transmission energy is independent of the delay constraint but depends upon the bandwidth and data. Hence, increasing the bandwidth of system, increases the optimal offloading percentage. The point after which the offloading percentage remains constant represent the minimum delay that is required by the cloud server to process the data $D_i$ from $N$ users for the given bandwidth $B$. Fig~\ref{fig:Delay_constraint:2}, shows the energy consumption of the user devices per unit time T. More energy can be saved at higher $T_\text{max}$ and at higher bandwidth. \\
%%%%%%%%%%%%%%%%%%%%%%%%%%%%%%%%%%%%%%%%%%%%%%%%%%%%%%%%%%%%%%%%%%%%%%%%%%%%%%%%%%%%%

\section{Conclusion}\label{sec:conclusion}
In this paper, we presented a delay constrained and energy optimized cloud offloading framework. A closed form solution is obtained to optimally offload the computation from multiple user devices, depending on the availability of bandwidth and latency constraints. The optimal offloading strategy is to offload more computation to the cloud, if sufficient bandwidth and cloud server capacity is available. 
%At lower bandwidth, the energy consumption of the device can be reduced by increasing the threshold delay. While . 
At higher threshold delay, the offloading decision is independent of the delay constraint. However, it depends upon the difference of the energy for transmitting and local processing. We also analyzed the trade-off between the bandwidth and cut-off delay considering the optimal offloading solution. 
%Bandwidth and the number of user devices have a major influence on the cut-off delay. The cut-off delay increases linearly with increasing the number of users, and it decreases exponentially with increasing bandwidth. 
In the future, we plan to extend the framework to study the effects of channel fading and shadowing on the optimal offloading strategy. 
\appendix 
In the following, we prove Theorem \ref{thm:offloading:1} using the method of Lagrange multipliers and Karush-Kuhn-Tucker (KKT) conditions.
% \subsection{Objective function and convexity}\label{Appx:A}
Substituting the maximum value of $T_\text{exe}(\alpha_i)$ in ~\eqref{eq:opt.sum.energy.2}, and replacing $\left(\frac{C_\text{serv,i}}{C_s}\right) = \gamma_i$, the constraint~\eqref{eq:opt.sum.energy.2} of optimization problem becomes 
$\sum\limits_i^N \frac{\alpha_i \gamma_i}{\left[ T_\text{max} - T_\text{rx}(\alpha_i) - T_\text{tr}(\alpha_i) \right]} \leq 1$.
The objective function for the optimization is hence given as 
\begin{align}
\mathcal{L}(\alpha_i, \psi, \nu) = \sum_{i=1}^N \: E_\text{sum,i}(\alpha_i) + \nonumber \\ \nu \left( \sum\limits_i^N  \frac{ \alpha_i \gamma_i}{\left[ T_\text{max} - (T_\text{tr,i}(\alpha_i) + T_\text{rx,i}(\alpha_i)) \right]} - 1 \right) - \text{tr}\left[\Matrix{\Psi} \text{diag}(\alpha_i)\right]
\end{align}
where, $\nu$ and $\psi$ are the Lagrange's multiplier.
Take partial derivative with respect to $\alpha_i$ and equating $\frac{\partial \mathcal{L}}{\partial \alpha_i}$ to 0, we get,
\begin{align} \label{eq:delay:optimization:121}
% \frac{\partial \mathcal{L}}{\partial \alpha_i} = \frac{\partial \sum\limits_i^N E_\text{tr,i}(\alpha_i) + E_\text{u,i}(\alpha_i)}{\partial \alpha_i} +  \nonumber \\ 
% \frac{\partial \nu \left( \sum\limits_i^N  \frac{  \alpha_i \gamma_i}{ \left[ T_\text{max} - (T_\text{tr,i}(\alpha_i) + T_\text{rx,i}(\alpha_i)) \right]} - 1 \right) }{\partial \alpha_i} - \frac{\partial \text{tr}\left[\Matrix{\Psi} \text{diag}(\alpha_i)\right]}{\partial \alpha_i}  \nonumber \\
\frac{\partial \mathcal{L}}{\partial \alpha_i} = E_\text{tr,i}^{'} + E_\text{u,i}^{'} + \nu \cdot \frac{ \gamma_i T_\text{max} }{\left[ T_\text{max} -  T_\text{rx,i}(\alpha_i) - T_\text{tr}(\alpha_i)  \right]^2} - \psi_i \nonumber \\
= 0 %\\ 
\end{align}
\begin{align} 
\text{Where,} \quad E_\text{tr,i}^{'}(\alpha_i) = \frac{\partial E_\text{tr,i}(\alpha_i)}{\partial \alpha_i} \: \text{and} \: E_\text{u,i}^{'}(\alpha_i) = \frac{\partial E_\text{u,i}(\alpha_i)}{\partial \alpha_i}.
\end{align}
$E_\text{tr,i}^{'}(\alpha_i)$ and $E_\text{u,i}^{'}(\alpha_i)$ are obtained by taking derivative of ~\eqref{eq:optimization:2} and ~\eqref{eq: optimization:1} respectively.
\begin{align}
E_\text{tr,i}^{'} &= \frac{2^{R_i} -1}{G} \cdot \left [\frac{d_i}{d_o}\right]^{\beta} \cdot N_0 \cdot \frac{D_i}{R_i} \\ 
E_\text{u,i}^{'} &= - L \cdot \epsilon_i \eta_i f_i(M).
\end{align}

Convexity check: Take second derivative of $\mathcal{L}(\alpha_i,\nu,\psi_i)$
\begin{eqnarray}
\frac{\partial^2 \mathcal{L}}{\partial \alpha_i^2} = 2 \nu \gamma_i T_\text{max}(T_\text{max} - \left[ T_\text{tr,i}(\alpha_i) + T_\text{rx,i}(\alpha_i)\right])^{-3}  \label{eq:convexity:1} \\
\text{where,} \quad T_\text{max} > (T_\text{tr,i}(\alpha_i) + T_\text{rx,i}(\alpha_i)). \nonumber \\
\frac{\partial^2 \mathcal{L}}{\partial \alpha_i \alpha_j} = 0, \forall i \neq j. \label{eq:convexity:2}
\end{eqnarray}
% The Hessian matrix is given as 
% \begin{equation} \label{eq:convexity:3}
% \nabla^2 \mathcal{L}_{\alpha_i}(\alpha_i, \nu, \psi_i)  = \begin{pmatrix}
% \frac{\partial^2 \mathcal{L}}{\partial \alpha_i^2} & \dots & \frac{\partial^2 \mathcal{L}}{\partial \alpha_i \alpha_j} \\
% \vdots & \ddots & \vdots \\
% \frac{\partial^2 \mathcal{L}}{\partial \alpha_j \alpha_i} & \dots &\frac{\partial^2 \mathcal{L}}{\partial \alpha_j^2}\\
% \end{pmatrix} 
% \end{equation}
%Substituting the values of ~\eqref{eq:convexity:1} and ~\eqref{eq:convexity:2} in Hessian matrix %~\eqref{eq:convexity:3}.
% shows that the Hessian matrix is positive, semi-definite, hence, the objective function is a convex function.
Substituting the values of ~\eqref{eq:convexity:1} and ~\eqref{eq:convexity:2} in Hessian matrix, shows that it is positive semi-definite, hence, the objective function is a convex function.\\
%\subsection{KKT conditions and evaluation of cases} \label{Appx:B}
KKT Conditions: 
\begin{eqnarray}
\nu \left( \sum\limits_i^N  \frac{  \alpha_i \gamma_i}{ \left[ T_\text{max} - (T_\text{tr,i}(\alpha_i) + T_\text{rx,i}(\alpha_i)) \right]} - 1 \right) = 0 \\
\nu \geq 0 ; \psi_i \alpha_i = 0;\\
\psi_i \geq 0; \alpha_i \geq 0;
\end{eqnarray}
CASE I: Fully-loaded case ($\nu > 0; \psi_i = 0$) \\ 
If $\nu$ > 0, i.e\, $\left( \sum\limits_i^N  \frac{ \alpha_i \gamma_i}{ \left[ T_\text{max} - (T_\text{tr,i}(\alpha_i) + T_\text{rx,i}(\alpha_i)) \right]} - 1 \right)$ = 0.
If $\psi_i$ = 0, i.e\, $\alpha_i$ > 0.
Substitute the values of $\psi_i = 0$ in ~\eqref{eq:delay:optimization:121}, and rearranging to evaluate $\alpha_i$, we get
\begin{align}
% E_\text{tr,i}^{'} + E_\text{u,i}^{'} + \nu \cdot \frac{ \gamma_i T_\text{max} }{\left[ T_\text{max} -  T_\text{rx,i}(\alpha_i) - T_\text{tr,i}(\alpha_i) \right]^2} - \underbrace{\Psi_i}_{=0} = 0 \\
\alpha_i =  \frac{1}{\frac{D_i}{B_i R_i} + \frac{D_\text{rx,i}}{B_\text{rx,i} R_\text{rx,i}}} \left(T_\text{max}  -  \sqrt{  \frac{ \nu \gamma_i T_\text{max}}{(-E_\text{tr,i}^{'} - E_\text{u,i}^{'})}} \right) \label{eq:delay:optimization:110}
\end{align}
CASE II: Underloaded ($\nu$ = 0; $\psi_i$ = 0)\\
If $\nu$ = 0, implies that $\left( \sum\limits_i^N  \frac{  \alpha_i \gamma_i}{ \left[ T_\text{max} - (T_\text{tr,i}(\alpha_i) + T_\text{rx,i}(\alpha_i)) \right]} - 1 \right)$ <  0.\\ 
It states that total required processing rate is less than the server capacity. It means cloud server can process all the computation data (i.\,e.\, $\alpha_i = 1$) from all user devices within the specified time. Substitute the values $\nu$ = 0 in ~\eqref{eq:delay:optimization:121} we get,
\begin{equation} \label{eq:delay:optimization:160}
E_\text{tr,i}'+ E_\text{u,i}' = \psi_i 
\end{equation}
Now, put $\psi_i$ = 0 in ~\eqref{eq:delay:optimization:160}.
In this case user devices can offload to the cloud if the transmission energy is less than or equal to local processing energy. \\
CASE III: Overloaded ($\nu$ > 0, $\psi_i$ > 0 i.\,e.\, $\alpha_i$ = 0) \\
Substitute $\alpha_i = 0$ in ~\eqref{eq:delay:optimization:121}
\begin{align}
 E_\text{tr,i}^{'} + E_\text{u,i}^{'} + \frac{\nu \gamma_i T_\text{max}}{T_\text{max}} - \psi_i = 0   \nonumber \\
\psi_i =  E_\text{tr,i}^{'} + E_\text{u,i}^{'} + \nu \gamma_i
\end{align}
The Lagrange multiplier $\nu > 0$ and $\gamma_i > 0$. Therefore, for $\psi_i > 0$, i.\,e.\, $\alpha_i = 0$, the rate at which transmission energy $E_\text{tr,i}^{'}$ increases, has to be greater than the rate of increase in energy consumption due to local processing $E_\text{u,i}^{'}$. However, in this case as $\nu>0$, that means the cloud server is already overloaded. Hence, user device do not offload even if the transmission energy is less than the local processing energy consumption, unless the difference  between $E_\text{u,i}^{'}$ and $E_\text{tr,i}^{'}$  exceeds the value $\nu \gamma_i$.  \\
CASE IV: Underloaded ($\nu$ = 0, $\psi_i$ > 0; $\alpha_i$ = 0); 
If $\nu$ = 0, then $\left( \sum\limits_i^N  \frac{  \alpha_i \gamma_i}{ \left[ T_\text{max} - (T_\text{tr,i}(\alpha_i) + T_\text{rx,i}(\alpha_i)) \right]} - 1 \right) <  0;$
% \begin{align}
% \left( \sum\limits_i^N  \frac{  \alpha_i \gamma_i}{ \left[ T_\text{max} - (T_\text{tr,i}(\alpha_i) + T_\text{rx,i}(\alpha_i)) \right]} - 1 \right) <  0;
% \end{align}
Now, substitute the values $\nu$ = 0 and $\alpha$ = 0 in equation~\eqref{eq:delay:optimization:121}, we get %{eq:delay:optimization:110}
\begin{align}
% E_\text{tr,i}^{'} + E_\text{u,i}^{'} + \nu \cdot \frac{ \gamma_i T_\text{max} }{\left[ T_\text{max} - 0 - 0  \right]^2} - \psi_i = 0 \\
%E_\text{tr,i}^{'} + E_\text{u,i}^{'} + 0 - \psi_i = 0 \\
  E_\text{tr,i}^{'} + E_\text{u,i}^{'} = \psi_i
\end{align}
As $\psi_i$ > 0, if $\alpha_i$ = 0, it means that if the transmission energy exceeds the in-device energy consumption, do not offload.\\
\balance
\bibliographystyle{IEEEtran}%{abbrv}%IEEEtran
\bibliography{ICC_paper_2017,ICC_v1}
%\bibliography{IEEEabrv,ICC_paper_2017,ICC_v1}

% that's all folks
\end{document}